\documentclass[conference,compsoc]{IEEEtran}
\usepackage{amsmath,amssymb,amsthm,booktabs,graphicx,array}
\usepackage{hyperref}
\usepackage{orcidlink}
\usepackage{xurl}
\newtheorem{theorem}{Theorem}

\newtheorem{definition}[theorem]{Definition}
\newcommand{\R}{\mathbb R}
\newcommand{\kernel}{\operatorname{ker}}
\newcommand{\range}{\operatorname{range}}
\newcommand{\proj}{\operatorname{Proj}}
\graphicspath{{figures/}{placeholder/}}

\title{Quantifying and Prioritizing Stealthy Source-Manipulation Threats
Against Multi-Source Anomaly Detectors}

\author{\IEEEauthorblockN{Indraveni Chebolu\orcidlink{0009-0009-6128-6253}}
\IEEEauthorblockA{\textit{Information Security Group} \\
\textit{Centre for Development of Advanced Computing}\\
Hyderabad, Telangana, India \\
indravenik@cdac.in}
\and
\IEEEauthorblockN{Seema Panghal\orcidlink{0009-0008-8943-3537}}
\IEEEauthorblockA{\textit{Information Security Group} \\
\textit{Centre for Development of Advanced Computing}\\
Hyderabad, Telangana, India \\
seemap@cdac.in}
\and
\IEEEauthorblockN{Arnab Mallick\orcidlink{0009-0008-5458-6429}}
\IEEEauthorblockA{\textit{Information Security Group} \\
\textit{Centre for Development of Advanced Computing}\\
Hyderabad, Telangana, India \\
arnabm@cdac.in}
\and
\IEEEauthorblockN{Ch A S Murty\orcidlink{0000-0002-9507-9481}}
\IEEEauthorblockA{\textit{Information Security Group} \\
\textit{Centre for Development of Advanced Computing}\\
Hyderabad, Telangana, India \\
chasmurty@cdac.in}
}

\begin{document}
\maketitle

\begin{abstract}
Security pipelines often fuse several data sources and protect a downstream
model with a cross-source anomaly detector.  Such detectors are usually judged
by their clean false-positive rate and by how well they catch random
corruption.  Neither metric captures an adversary who controls some sources
and targets a specific downstream decision.  We propose a threat-quantification
score, the \emph{task-conditioned visibility margin}, defined as the least
detector evidence a manipulation must reveal per unit of harm to the
downstream task.  The margin has a closed form, takes about $85\,\mu$s per
certificate, yields a minimum-evidence attack, and identifies task-relevant
directions the detector cannot see.  On a 17-channel industrial telemetry benchmark, a
detector that flags 82.8\% of random single-channel corruptions is first-order
blind on 64.7\% of task/subset pairs in an audit of one- to four-channel
subsets.  Task-conditioned manipulations cause more than twice as many falsely
reassuring errors as generic stealthy ones, and more of them evade the alarm on
four of five splits, mainly at small budgets.  Used to prioritize audits on
held-out subsets, the margin finds a median of four verified failures in ten
checks, compared with none for random order and two for alarm-based ranking,
and performs comparably to the task-gradient baseline.  It also ranks vulnerable
configurations of the learned GDN and TranAD detectors and of a five-source
crop-yield panel.
\end{abstract}

\section{Introduction}
Analysts rarely have the budget to verify every data source that feeds a
model.  When several sources are fused and checked by a cross-source anomaly
detector, a practical question follows: \emph{which source subsets should be
audited first?}  Standard detector metrics do not answer it.  A detector can
flag most random corruptions and still miss the coordinated change that a
downstream decision depends on.

The mechanism is geometric.  A detector constrains only the directions its
residual responds to; when the task depends on the remaining directions, a
coordinated change across sources moves the decision while the residual
barely moves.

This is the static, task-conditioned form of unobservable false-data
injection~\cite{liu2009fdi}.  We turn it into a per-subset threat score for
learned, nonlinear detectors and downstream models, and use the score to rank
audits.

\paragraph{Contributions}
\begin{itemize}
\item \textbf{Threat score.} For any compromised source subset, the margin
$\Gamma_M$ measures the minimum whitened detector evidence per unit of
downstream harm.  It has a closed form, gives a minimum-evidence attack, and
equals the slope of the optimal nonlinear evidence--harm trade-off for small
attacks.
\item \textbf{Explanation.} The margin separates exact blind spots from weakly
visible directions and shows why the same detector carries different risk for
different downstream tasks.
\item \textbf{Prioritization.} On held-out source subsets, ranking by
$\Gamma_M$ finds verified failures ahead of random and alarm-based orderings
and comparably to the task-gradient baseline, while also diagnosing which
detector directions make the subsets vulnerable.
\item \textbf{Evaluation.} We test principal-component-analysis (PCA),
reconstruction, GDN, and TranAD detectors on industrial telemetry and a crop-yield panel with adaptive
finite attacks.  The artifact~\cite{arnab_mallick_2026_23682852} is public.
\end{itemize}

\section{Threat Model}
The state $x\in\R^k$ concatenates source blocks.  The detector computes a
residual vector $g(x)\in\R^r$ and raises an alarm when a score of $g(x)$
crosses a threshold calibrated at the 95th validation percentile.  The
downstream task is a differentiable scalar $f(x)$, such as a regression output
or class margin, signed so that increasing it is harmful.

The adversary controls a declared subset $M$ of sources, for example through a
compromised acquisition, calibration, or reporting path.  Its actions are
parameterized by $b\in\R^m$ through a map $\phi_{x,M}$ with
$\phi_{x,M}(0)=x$; sources outside $M$ stay fixed.  We write
\[
G_M(b)=g(\phi_{x,M}(b)),\qquad F_M(b)=f(\phi_{x,M}(b)).
\]
The adversary seeks a large task change $F_M(b)-F_M(0)$ with a small detector
change $G_M(b)-G_M(0)$ under a Euclidean norm budget $\|b\|_2\le\rho$.
The defender's objective is to prioritize source subsets for verification.

\paragraph{Controls and attacker knowledge}
Controlling a source means altering the input values delivered during the
evaluated window, not the ground truth, other sources, model parameters, or
alarm threshold.  In Hydraulic telemetry, each compromised channel permits a
gain and a bias on its original trace; summary features are recomputed from
that trace (Appendix~\ref{app:controls}).  This models a compromised input
path, not arbitrary editing of derived features.  The defender computes the
margin using its own models and tests attacks under white-box access;
Section~\ref{sec:graybox} examines transfer to surrogate-model attackers.
Physical compromise and fully black-box attacks are outside these experiments.

\section{Quantifying the Threat}
Let $J_M=DG_M(0)$ be the restricted detector Jacobian, $u_M=\nabla F_M(0)$ the
restricted task gradient, $\Omega\succ0$ the residual weighting (the noise
covariance, or $I$ for standardized residuals), and
$A_M=J_M^\top\Omega^{-1}J_M$ the local detector information.
Here $\proj$ denotes Euclidean orthogonal projection and $^\dagger$
denotes the Moore--Penrose pseudoinverse.

\paragraph{Exact blind spots}
If $\proj_{\kernel J_M}u_M\ne0$, the adversary can change the task while the
detector output stays constant to first order.

\begin{theorem}[Harmful blind manifold]\label{thm:manifold}
If $G_M$ is $C^1$ with constant rank $r_M$ near zero, its level set through
zero is locally a manifold of dimension $m-r_M$ with tangent space
$\kernel J_M$.  If $h_M=\|\proj_{\kernel J_M}u_M\|_2>0$, a unit-speed curve on
this level set changes the task at first-order rate $h_M$.
\end{theorem}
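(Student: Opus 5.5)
The first claim follows from the constant rank theorem applied to $G_M:\R^m\to\R^r$ near $b=0$. Because $DG_M$ has constant rank $r_M$ on a neighborhood of $0$, there are local diffeomorphisms $\psi$ of $\R^m$ around $0$ and $\chi$ of $\R^r$ around $G_M(0)$ such that $\chi\circ G_M\circ\psi^{-1}(z)=(z_1,\dots,z_{r_M},0,\dots,0)$. The level set $L=\{b: G_M(b)=G_M(0)\}$ then corresponds, in the $z$ coordinates, to the coordinate slice $\{z_1=\dots=z_{r_M}=0\}$, which is locally an embedded submanifold of dimension $m-r_M$. For the tangent space, differentiate $G_M(\gamma(t))=G_M(0)$ along any $C^1$ curve $\gamma$ in $L$ with $\gamma(0)=0$. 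This gives $J_M\gamma'(0)=0$, so $T_0L\subseteq\kernel J_M$. Both spaces have dimension $m-r_M$, since $\operatorname{rank}J_M=r_M$ at $b=0$, so they are equal.

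For the second claim, set $v=\proj_{\kernel J_M}u_M/h_M$. This is a unit vector in $\kernel J_M=T_0L$, so there is a $C^1$ curve $\gamma:(-\epsilon,\epsilon)\to L$ with $\gamma(0)=0$ and $\gamma'(0)=v$. One concrete choice is $\gamma(t)=\psi^{-1}(t\,D\psi(0)v)$; this stays in the slice because $D\psi(0)$ maps $\kernel J_M$ onto the slice directions. Reparametrizing by arc length gives a unit-speed curve with the same initial velocity $v$, since $\|\gamma'(0)\|=1$ makes the reparametrization's derivative equal to $1$ at $t=0$. By the chain rule, $\frac{d}{dt}F_M(\gamma(t))|_{t=0}=u_M^\top v$. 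Because $u_M-\proj_{\kernel J_M}u_M$ is orthogonal to $\kernel J_M$, we get $u_M^\top v=\|\proj_{\kernel J_M}u_M\|_2^2/h_M=h_M$. Along this curve the detector output is exactly constant, not just constant to first order, while the task grows at rate $h_M$.

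The statement could also be read as saying that \emph{every} unit-speed curve on $L$ changes the task at rate $h_M$. That reading is false: a general such curve has rate $u_M^\top\gamma'(0)$, which can be anything in $[-h_M,h_M]$. So $h_M$ should be presented as the rate achieved by the curve above, and also as the maximum first-order rate over all unit-speed curves on $L$ through $0$, by Cauchy--Schwarz on $\kernel J_M$. The step that needs the most care is the tangent-space identification, specifically the use of constant rank on a neighborhood rather than only at $0$. Without it the level set need not be a manifold, and a vector in $\kernel J_M$ need not be realized by a curve in $L$. I would state explicitly that the constant-rank hypothesis is what allows the constant rank theorem to supply the straightening chart $\psi$.
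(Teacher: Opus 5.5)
Your proof is correct and takes the paper's route: the constant-rank level-set theorem gives a manifold with tangent space $\kernel J_M$, and the unit vector $v=\proj_{\kernel J_M}u_M/h_M$, taken as the initial velocity of a curve in the level set and reparametrized by arc length, changes the task at rate $u_M^\top v=h_M$ by the chain rule. You add details the paper leaves implicit (the dimension count giving $T_0L=\kernel J_M$, an explicit curve from the straightening chart, and the orthogonality step behind $u_M^\top v=h_M$), and your point that $h_M$ is the maximal rather than universal rate is right, with one small correction: when $\dim\kernel J_M=1$ the attainable rates are only $\pm h_M$, not every value in $[-h_M,h_M]$.
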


\paragraph{Graded visibility}
Exact blind spots are the limiting case of a broader phenomenon: directions
that the detector observes only weakly.  The margin grades this continuum,
from exactly blind ($\Gamma_M=0$) to strongly observed.

\begin{definition}[Task-conditioned visibility margin]
For $u_M\ne0$,
\begin{equation}
\Gamma_M=\inf_{u_M^\top b=1}\|J_Mb\|_{\Omega^{-1}},\qquad
\|z\|_{\Omega^{-1}}^2=z^\top\Omega^{-1}z.
\label{eq:gamma}
\end{equation}
\end{definition}

A small $\Gamma_M$ means that a unit of first-order task change requires
little linearized detector evidence.  If $u_M=0$, no first-order task change
is possible on that subset, and the case is excluded from ranking.

\begin{theorem}[Closed form and minimum-evidence attack]\label{thm:gamma}
$\Gamma_M=0$ if and only if $h_M>0$.  Otherwise
\begin{equation}
\Gamma_M^2=\bigl(u_M^\top A_M^\dagger u_M\bigr)^{-1},\qquad
b^*(\Delta)=\Delta\,\frac{A_M^\dagger u_M}{u_M^\top A_M^\dagger u_M},
\label{eq:closed}
\end{equation}
and $b^*(\Delta)$ minimizes first-order detector evidence among attacks with
first-order task shift $\Delta$.
\end{theorem}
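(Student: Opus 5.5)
The plan is to reduce the problem to a whitened least-norm problem. Substituting $z=\Omega^{-1/2}J_Mb$ makes the objective Euclidean, since $\|J_Mb\|_{\Omega^{-1}}^2=b^\top A_Mb$. Because $\Omega\succ0$, we have $\kernel A_M=\kernel J_M$ and $\range A_M=\range J_M^\top=(\kernel J_M)^\perp$. Decompose $u_M=u_\parallel+u_\perp$ with $u_\perp=\proj_{\kernel J_M}u_M$ and $u_\parallel\in\range A_M$, so that $h_M=\|u_\perp\|_2$.

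\textbf{Zero case.} If $h_M>0$, take $b=u_\perp/h_M^2$. This $b$ lies in $\kernel J_M$, so $J_Mb=0$. It is also feasible, since $u_M^\top b=u_\perp^\top u_\perp/h_M^2=1$. Hence $\Gamma_M=0$, and the infimum is attained.

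Conversely, suppose $h_M=0$, so $u_M\in\range A_M$. For any feasible $b$, use the identity $A_MA_M^\dagger u_M=u_M$ and Cauchy--Schwarz in the semi-inner product $\langle x,y\rangle_A=x^\top A_My$:
\[
1=u_M^\top b=(A_M^\dagger u_M)^\top A_Mb\le\bigl(u_M^\top A_M^\dagger A_MA_M^\dagger u_M\bigr)^{1/2}\bigl(b^\top A_Mb\bigr)^{1/2}.
\]
The first factor simplifies because $A_M^\dagger A_MA_M^\dagger=A_M^\dagger$, giving
\[
b^\top A_Mb\ge\bigl(u_M^\top A_M^\dagger u_M\bigr)^{-1}.
\]
The quantity $u_M^\top A_M^\dagger u_M$ is strictly positive: $A_M^\dagger$ is positive definite on $\range A_M$, and $u_M\ne0$ lies in that range. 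So the lower bound is positive and $\Gamma_M>0$, which proves the equivalence.

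\textbf{Closed form.} The bound is attained by $b=A_M^\dagger u_M/(u_M^\top A_M^\dagger u_M)$. This choice is feasible. It also achieves equality, since $b^\top A_Mb=u_M^\top A_M^\dagger A_MA_M^\dagger u_M/(\cdot)^2=1/(u_M^\top A_M^\dagger u_M)$. This gives the formula for $\Gamma_M^2$ in \eqref{eq:closed}.

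\textbf{Minimum-evidence attack.} For a general task shift $\Delta$, the first-order problem is
\[
\min\ b^\top A_Mb\quad\text{s.t. } u_M^\top b=\Delta.
\]
This is the unit problem scaled by $\Delta$, because the objective is homogeneous of degree two and the constraint is linear. Therefore $b^*(\Delta)$ is optimal, with evidence $|\Delta|\Gamma_M$. Here first-order evidence means $\|J_Mb\|_{\Omega^{-1}}$ and first-order task shift means $u_M^\top b$, as in the linearization $G_M(b)-G_M(0)\approx J_Mb$. The case $\Delta=0$ is trivial.

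The only delicate step is the pseudoinverse bookkeeping. One must check that $u_M^\top A_M^\dagger u_M>0$ exactly when $h_M=0$, and use the identities $A_MA_M^\dagger u_M=u_M$ and $A_M^\dagger A_MA_M^\dagger=A_M^\dagger$. Both hold because $u_M\in\range A_M$ and $A_M$ is symmetric, so $A_MA_M^\dagger$ is the orthogonal projector onto $\range A_M$. The statement leaves implicit that the closed form is only meaningful when $h_M=0$. When $h_M>0$, the expression $u_M^\top A_M^\dagger u_M$ can still be positive, but it is then not the reciprocal of $\Gamma_M^2$. I would state this caveat explicitly.
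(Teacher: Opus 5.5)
Your proof is correct and matches the paper's argument. It uses the same blind-case witness $b_0=\proj_{\kernel J_M}u_M/h_M^2$, the same Cauchy--Schwarz lower bound $b^\top A_Mb\ge(u_M^\top A_M^\dagger u_M)^{-1}$, attainment by $b^*(1)$, and scaling by $\Delta$; the paper factors the bound through $A_M^{\dagger/2}$ and $A_M^{1/2}$, while you use the $A_M$-semi-inner product with $A_M^\dagger u_M$, which is equivalent, and your explicit check that $u_M^\top A_M^\dagger u_M>0$ when $h_M=0$ is a detail the paper leaves implicit.
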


The margin is a local quantity, and it also governs the true nonlinear
trade-off for small attacks.

\begin{theorem}[Local frontier]
\label{thm:frontier}
Suppose $F_M$ and $G_M$ are $C^2$ and
$\Gamma_M>0$. Fix any constant
$C>\|b^*(1)\|_2$.

For $\Delta>0$, define the minimum
observable evidence as
\begin{equation}
\label{eq:local-frontier}
V_M(\Delta)
:=
\min_{\substack{
F_M(b)-F_M(0)=\Delta\\
\|b\|_2\le C\Delta
}}
\|G_M(b)-G_M(0)\|_{\Omega^{-1}}.
\end{equation}

Then, as $\Delta\downarrow 0$,
\begin{equation}
\boxed{
V_M(\Delta)
=
\Gamma_M\Delta+O(\Delta^2).
}
\end{equation}
\end{theorem}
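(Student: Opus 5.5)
We prove matching upper and lower bounds of the form $\Gamma_M\Delta+O(\Delta^2)$. Both come from Taylor expansion on the ball $\|b\|_2\le C\Delta$, combined with the closed form of Theorem~\ref{thm:gamma}. Since $F_M,G_M$ are $C^2$, there are constants $L_F,L_G$ (valid on a fixed small ball) such that
\[
F_M(b)-F_M(0)=u_M^\top b+R_F(b),\qquad G_M(b)-G_M(0)=J_Mb+R_G(b),
\]
with $|R_F(b)|\le L_F\|b\|_2^2$ and $\|R_G(b)\|_{\Omega^{-1}}\le L_G\|b\|_2^2$. On the feasible set both remainders are therefore $O(\Delta^2)$. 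We also record the homogeneity facts implied by Theorem~\ref{thm:gamma}: for any $b$ with $u_M^\top b=\delta>0$ we have $\|J_Mb\|_{\Omega^{-1}}\ge\Gamma_M\delta$, with equality at $b^*(\delta)=\delta\,b^*(1)$.

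\textbf{Lower bound.} Let $b$ be feasible for \eqref{eq:local-frontier}. Then $u_M^\top b=\Delta-R_F(b)=\Delta+O(\Delta^2)$, and by the triangle inequality
\[
\|G_M(b)-G_M(0)\|_{\Omega^{-1}}\ \ge\ \|J_Mb\|_{\Omega^{-1}}-L_GC^2\Delta^2\ \ge\ \Gamma_M\,u_M^\top b-O(\Delta^2)\ =\ \Gamma_M\Delta-O(\Delta^2).
\]
The middle inequality uses the homogeneity bound when $u_M^\top b>0$, which holds for small $\Delta$; if $u_M^\top b\le 0$ the bound is trivial. Taking the infimum over feasible $b$ gives $V_M(\Delta)\ge\Gamma_M\Delta-O(\Delta^2)$. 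The minimum in \eqref{eq:local-frontier} exists because the feasible set is compact once it is nonempty, and nonemptiness follows from the upper-bound construction.

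\textbf{Upper bound (the main step).} Here we must exhibit a feasible point that satisfies the nonlinear constraint $F_M(b)-F_M(0)=\Delta$ exactly. The linear attack $b^*(\Delta)$ only achieves the task shift to first order. We correct it along a fixed direction: put $e=u_M/\|u_M\|_2^2$ and seek $b=b^*(\Delta)+t\,e$ with scalar $t$. Write
\[
\psi(t)=F_M\bigl(b^*(\Delta)+te\bigr)-F_M(0)-\Delta .
\]
Then $\psi(0)=R_F(b^*(\Delta))=O(\Delta^2)$ and $\psi'(t)=\nabla F_M(\cdot)^\top e=1+O(\Delta+|t|)$. A one-dimensional intermediate value argument, or equivalently the implicit function theorem applied to $(t,\Delta)$ near the origin, therefore gives a root $t(\Delta)=O(\Delta^2)$. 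The resulting $b$ satisfies $\|b\|_2\le\|b^*(1)\|_2\Delta+O(\Delta^2)$. This is strictly below $C\Delta$ for small $\Delta$, which is exactly where the hypothesis $C>\|b^*(1)\|_2$ is used. The evidence of this point is
\[
\|J_Mb\|_{\Omega^{-1}}+O(\Delta^2)\ \le\ \Gamma_M\Delta+\|J_M e\|_{\Omega^{-1}}|t|+O(\Delta^2)\ =\ \Gamma_M\Delta+O(\Delta^2).
\]
Hence $V_M(\Delta)\le\Gamma_M\Delta+O(\Delta^2)$, and together with the lower bound this proves the claimed expansion.

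The delicate points are all in this upper-bound step. We need $u_M\ne0$ so that the correction direction $e$ exists; this holds because $\Gamma_M>0$ is only defined for $u_M\ne0$. We need the Taylor constants to be uniform on a neighborhood containing all the relevant $b$, which follows from $C^2$ regularity on a compact ball. Finally, the root must satisfy $|t|=O(\Delta^2)$ and not merely $o(\Delta)$; this is what makes the error second order rather than $o(\Delta)$.
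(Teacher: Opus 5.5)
Your proof is correct and matches the paper's argument. The lower bound (uniform $O(\Delta^2)$ Taylor remainders on $\|b\|_2\le C\Delta$, the reverse triangle inequality, and $\|J_Mb\|_{\Omega^{-1}}\ge\Gamma_M u_M^\top b$) is the same, and the upper bound is likewise a one-dimensional root-finding step that uses $C>\|b^*(1)\|_2$ for feasibility. The one minor difference: the paper rescales along $w^*=b^*(1)$, solving $F_M(tw^*)-F_M(0)=\Delta$ with $t(\Delta)=\Delta+O(\Delta^2)$, so no extra evidence term appears; you instead add an $O(\Delta^2)$ correction along $u_M/\|u_M\|_2^2$, which costs only the harmless term $\|J_Me\|_{\Omega^{-1}}|t|$. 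You also check that the minimum is attained, which the paper leaves implicit.
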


Proofs are in Appendix~\ref{app:proofs}.  Theorem~\ref{thm:frontier} concerns
small attacks: curvature affects its $O(\Delta^2)$ remainder, while
$\Gamma_M$ alone does not account for a finite norm budget or the available
alarm slack.  A low margin therefore need not imply a feasible undetected
attack; every reported failure is tested against the full nonlinear pipeline.

The margin is invariant under smooth, locally invertible changes of attack
coordinates.  It also cannot decrease when an independent check contributes
a positive-semidefinite information matrix $B_S\succeq0$ to $A_M$.  We
standardize task outputs to compare margins across tasks.

\paragraph{Why task conditioning matters}
The smallest singular value $\sigma_{\min}$ of $J_M$ is task-agnostic,
whereas $\Gamma_M$ uses the task gradient.  Two tasks can therefore share the
same detector and Jacobian spectrum but have different margins
(Table~\ref{tab:task}).

\paragraph{What the task-gradient baseline omits}
The norm $\|u_M\|_2$ measures task change per unit control magnitude but
ignores detector sensitivity.  For example, let $\Omega=I$,
$J_M=\operatorname{diag}(1,\epsilon)$, $0<\epsilon<1$, and let
$e_1,e_2$ be standard basis vectors.  Tasks with gradients $e_1$ and $e_2$
have equal norms but margins $1$ and $\epsilon$: the second is less visible
to the detector.  Conversely, if $A_M=\alpha I$ for a common $\alpha>0$,
then $\Gamma_M=\sqrt{\alpha}/\|u_M\|_2$; sorting by increasing margin is
equivalent to sorting by decreasing task-gradient norm.  The margin's
additional value is to expose \emph{detector visibility}, even when both
scores yield similar empirical audit rankings.

\paragraph{Audit prioritization}
For a fixed task and accepted state, the defender ranks candidate subsets
by increasing $\Gamma_M$ and verifies selected cases with finite attacks.
Each certificate uses a singular-value decomposition (SVD) and a
pseudoinverse.  It costs about $85\,\mu$s, around 70 times less than one
numerical attack optimization (5.9\,ms).

\section{Evaluation Setup}
\paragraph{Industrial telemetry}
The UCI Hydraulic Systems rig~\cite{helwig2015hydraulic} provides 2,205
cycles from 17 channels.  Each channel contributes its mean, standard
deviation, root-mean-square (RMS), minimum, and maximum (85 features).  Four downstream tasks
predict cooler, valve, pump, and accumulator condition.  We use five seeded
contiguous-block 60/20/20 splits with standardized multi-output ridge and
multilayer-perceptron (MLP) task models (balanced accuracy .791 and .914).  The PCA detector projects standardized
features onto the $r$ lowest-variance principal directions, with residual
dimension $r\in\{1,2,4,8,85\}$ ($r=85$ is the full residual), and alarms on
the squared residual norm, the square of the evidence in Eq.~\eqref{eq:gamma}
when $\Omega=I$.  The adversary applies a fractional gain
and a bias to each compromised channel, and all five summaries are recomputed
exactly (Appendix~\ref{app:controls}).

\paragraph{Learned detectors}
We repeat the audit with GDN~\cite{deng2021gdn} and
TranAD~\cite{tuli2022tranad} on per-second 17-channel sequences, with the
native point score aggregated by its maximum over each cycle.  Here $J_M$ is
the central-difference Jacobian (step $10^{-4}$) of the full signed residual
sequence, scaled per channel by its training median and interquartile range
and evaluated at the training-mean cycle.  The max-aggregated alarm is
non-smooth, so for these detectors we use the margin as a ranking score and
validate it with finite attacks.

\paragraph{Crop-yield panel}
The ICRISAT panel~\cite{mohapatra2023icrisatdata} has five sources (rainfall,
temperature, irrigated area, nitrogen fertilizer, acreage) and reported yield
as the target.  We use a temporal split (train through 2008, validate on
2009--2010, test on 2,002 later rows), a ridge or contextual MLP task model,
and a PCA or reconstruction detector.

\paragraph{Attacks}
All attacks share a source support $M$, norm budget $\|b\|_2\le\rho$,
and detector threshold.  A0 uses a random direction; A1 follows the task
gradient; A2 follows the detector's least visible direction regardless of
harm (generic stealth); A4 follows the task-conditioned direction $b^*$ of
Eq.~\eqref{eq:closed}, rescaled to the budget.  The matched comparison uses
the 17 single-channel and 43 evaluated two-channel subsets.  These supports
are first-order identified ($\ker J_M=\{0\}$), so no nonzero control is
perfectly invisible to first order; the artifact~\cite{arnab_mallick_2026_23682852} also implements an
exact-kernel attack (A3) for blind subsets.  Sequential least-squares
programming (SLSQP) supplies a constrained finite-attack reference.
Numerical rank counts singular values above
$10^{-10}\max(m,r)\max(\sigma_1,1)$, where $\sigma_1$ is the largest
singular value of $J_M$.

\paragraph{Outcome}
A \emph{verified failure} is a feasible finite attack that turns an initially
correct, accepted prediction into a falsely reassuring one, i.e., a
healthier condition than the true one (such as full cooler efficiency when
the cooler is degraded), without raising an alarm.  In Table~\ref{tab:matched},
\emph{Adverse} is the rate of newly caused falsely reassuring errors,
\emph{Alarm} is the rate of flagged attacks, and \emph{Joint} is the rate of
adverse errors without alarms, all among initially correct, accepted cases.
Thresholds, supports, and budgets are fixed on training and validation data;
test data is used once.

\section{Results}
\subsection{Detectors That Catch Random Corruption Have Blind Spots}\label{sec:blind}
With residual dimension $r=8$, the Hydraulic detector has a test
false-positive rate (FPR) of 3.18\% and flags 82.8\% of random single-channel gain/bias corruptions at
radius .5.  Over the reference-split audit of source subsets of size one to
four, the same detector is first-order blind on 64.7\% of task/subset pairs.
With $r\le4$ the blind fraction is at least 98.9\%,
as expected: most audited subsets have three or four channels and hence six or
eight controls, more than the $r$ residual coordinates.  The full residual
($r=85$) has no blind pair.

Table~\ref{tab:task} compares the same two-channel subset, TS4 (temperature)
and VS1 (vibration), across four tasks.  The rank and $\sigma_{\min}$ of
$J_M$ are identical, while $\Gamma_M$ spans almost three orders of magnitude;
the margin's ordering agrees with the reported realized A4 harm.  The last
column records the finite A4 task-output change in standardized task units.

\begin{table}[t]\centering\small
\caption{One subset (TS4+VS1), four tasks.  Rank and the smallest
Jacobian singular value are task-agnostic; A4 harm is a realized
standardized task change.}
\label{tab:task}
\setlength{\tabcolsep}{4pt}
\begin{tabular}{lrrrr}\toprule
Task&Rank&$\sigma_{\min}$&$\Gamma_M$&A4 harm\\\midrule
Cooler&4&$7.69\!\times\!10^{-8}$&$1.69\!\times\!10^{-4}$&.0008\\
Valve&4&$7.69\!\times\!10^{-8}$&$5.73\!\times\!10^{-7}$&.0794\\
Pump&4&$7.69\!\times\!10^{-8}$&$2.31\!\times\!10^{-6}$&.0185\\
Accumulator&4&$7.69\!\times\!10^{-8}$&$3.71\!\times\!10^{-7}$&.1301\\\bottomrule
\end{tabular}
\end{table}

\subsection{Task-Conditioned Attacks}
We compare attacks on identical tuples of split, state, subset, task, budget,
and threshold.  The population contains 1,229--1,369 initially correct,
accepted task--cycle units per split, over 17 single-channel and 43
evaluated two-channel subsets and five budgets.

\begin{table}[t]\centering\scriptsize
\setlength{\tabcolsep}{3pt}
\caption{Matched Hydraulic outcomes (\%): five-split median [range].
\emph{Joint} is a newly caused falsely reassuring error without an alarm.}
\label{tab:matched}
\begin{tabular}{lrrr}\toprule
Attack&Adverse&Alarm&Joint\\\midrule
A0 random&18.3 [14.1, 20.5]&95.9 [89.3, 96.5]&0.20 [0.05, 0.74]\\
A1 task gradient&46.3 [40.9, 52.8]&97.9 [94.8, 98.1]&0.22 [0.11, 0.99]\\
A2 generic stealth&6.6 [5.1, 7.3]&36.5 [18.7, 41.7]&2.02 [1.38, 3.25]\\
A4 task conditioned&14.5 [11.9, 16.9]&49.2 [30.9, 53.2]&2.31 [1.76, 5.35]\\\bottomrule
\end{tabular}
\end{table}

Table~\ref{tab:matched} summarizes the outcomes.  A4 causes more adverse
outcomes than A2 on every split (11.9--16.9\% versus 5.1--7.3\%).  A1 has the highest adverse-outcome rate, but its attacks also trigger
detector alarms in at least 94.8\% of evaluated cases.  On
the joint endpoint, A4 exceeds A2 on four of five splits, with the gain
concentrated at budgets .1--.5; at larger budgets both attacks trigger more
alarms and converge (Fig.~\ref{fig:budget}).

The margin also explains where the A4 advantage concentrates.  Low-margin
directions trade detector evidence for control magnitude (median
$\|b^*(1)\|_2=11.4$ over matched cases), so at small budgets A4 achieves its
task effect while staying closest to the residual floor, which matches the
gains at budgets .1--.5 in Fig.~\ref{fig:budget}.  Certificates use one
Jacobian at the training-mean cycle, while alarms are evaluated on each test
cycle; these per-cycle deviations account for alarms beyond the first-order
prediction.

\begin{figure}[t]\centering
\includegraphics[width=\columnwidth]{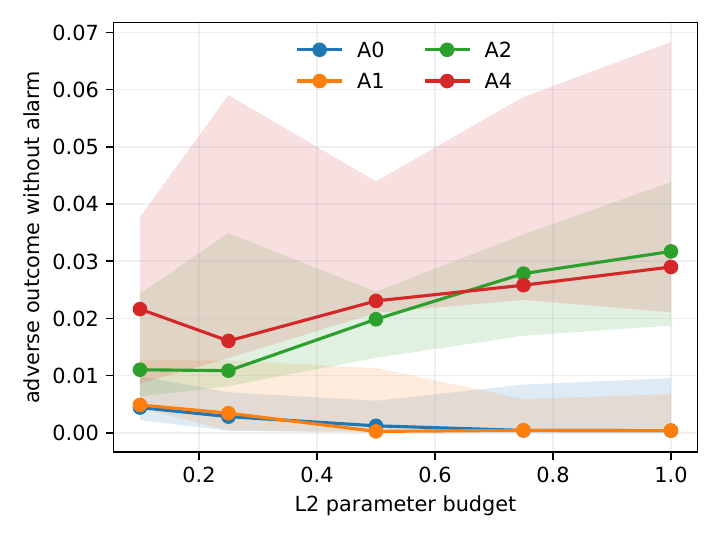}
\caption{Joint adverse-without-alarm rate by attack budget.  Lines are
five-split medians; bands span the splits.}
\label{fig:budget}
\end{figure}

\subsection{Audit Prioritization}
An audit \emph{case} is a tuple of split, source subset, task, and one fixed
initially correct, accepted state.  We partition cases by hashed subset
identity: 520 development cases select tie rules and a combined ranker,
while 680 cases from disjoint subsets are held out.
A finite search at budgets .1, .25, and .5 labels 94 of the held-out cases as
verified failures.  It starts from zero, the task-gradient direction, and
$b^*$, so the labels do not favor either informed ranker.  Alarm slack ranks
cases by their clean distance to the threshold.

\begin{table}[t]\centering\small
\caption{Median held-out verified failures found within an audit budget (94
failures among 680 held-out cases).}
\label{tab:audit}
\begin{tabular}{lrrrr}\toprule
Ranker&10&25&50&100\\\midrule
Random order&0&3&6&13\\
Alarm slack / scalar score&2&3&4&13\\
Task-gradient norm&4&8&13&16\\
Combined (validation-selected)&2&9&13&16\\
$\Gamma_M$ (ours)&4&9&13&15\\\bottomrule
\end{tabular}
\end{table}

\begin{figure*}[t]\centering
\includegraphics[width=.78\textwidth]{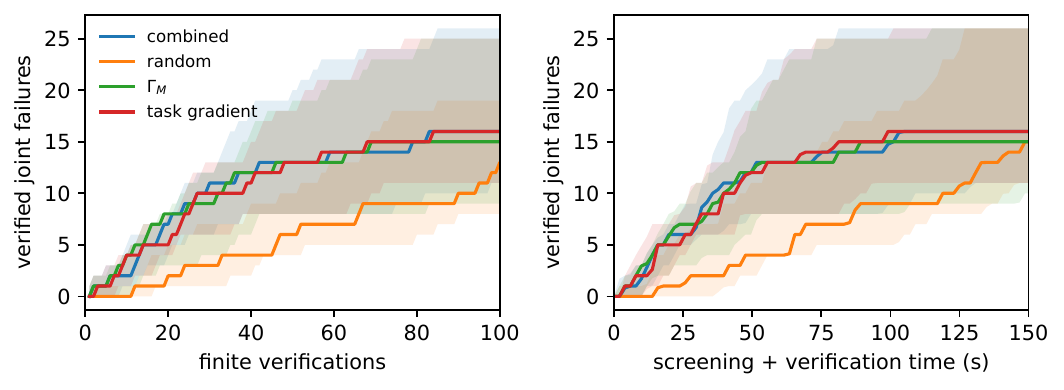}
\caption{Held-out audit allocation by number of verifications (left) and by
screening-plus-verification time (right).  Curves are five-split medians;
bands span the splits.}
\label{fig:audit}
\end{figure*}

Table~\ref{tab:audit} and Fig.~\ref{fig:audit} report the results.  Within
ten audits, $\Gamma_M$ finds a median of four failures, compared with zero for
random order and two for both alarm-slack ranking and the combined ranker.
Its advantage over alarm-slack ranking is 2.2 failures (paired bootstrap
interval $[1.0,3.2]$).  Compared with task-gradient ranking, the reported
paired difference is 0.6 (interval $[0,1.8]$).  This comparison uses
within-split count differences: it need not equal the difference of the
separately reported medians, both of which are four at ten checks.
At budgets 25, 50, and 100, the medians are $9$, $13$, and $15$ for the
margin, versus $8$, $13$, and $16$ for gradient norm
(Table~\ref{tab:audit}).  The two rankers therefore perform comparably on
held-out cases; the margin additionally identifies weakly monitored task
directions (Table~\ref{tab:task}).  Residual whitening by identity,
diagonal, or Ledoit--Wolf covariance gives the same margin-ranker medians.

\subsection{Learned Detectors}
Across four independently trained splits and 128 subset/task combinations,
neither GDN nor TranAD has an exact first-order blind direction, yet margins on
the reference split span from .0029 to 1138 for GDN and from .010 to 3112 for
TranAD.  At budget .5, the Spearman association between the margin and the
joint outcome is negative on all four splits: $-.675$ to $-.229$ for GDN and
$-.655$ to $-.325$ for TranAD.  Task-conditioned attacks expose 24--34 GDN and
24--39 TranAD vulnerable configurations per split; on the reference split,
generic stealth exposes 12 and 19.  Fig.~\ref{fig:external} shows that
auditing in margin order finds these failures faster than random order.

\begin{figure}[t]\centering
\includegraphics[width=\columnwidth]{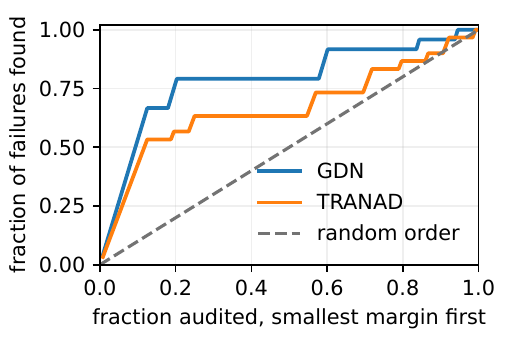}
\caption{GDN and TranAD audit allocation at budget .5, with configurations
ordered by increasing margin.}
\label{fig:external}
\end{figure}

\subsection{Gray-Box Transfer}\label{sec:graybox}
We repeat the Hydraulic attacks with surrogate task models and detectors on 60
fixed two-channel subsets.  White-box attacks have median standardized harm
.155.  Surrogates with the same architecture, trained on 10--100\% of the
data or with independent weights, retain median harm .058 to .156, with
median first-order detector leakage below $10^{-6}$.  Here leakage denotes
the target detector's linearized residual change under the transferred
control, $\|J_Mb\|_{\Omega^{-1}}$.
A different-architecture surrogate
retains harm .089 with leakage $1.7\times10^{-5}$.  Detector mismatch
affects leakage, whereas task-gradient mismatch affects transferred harm.
These tests evaluate imperfect model knowledge for fixed source supports.

\subsection{Crop-Yield Panel and Local Accuracy}\label{sec:crop}
The linear crop detector has six harmful blind subsets among all 31.  Adding
a direct check on fertilizer leaves two, and adding acreage removes all of
them.  With the nonlinear task model and reconstruction detector, the margin
is negatively associated with A4 harm for every one of five model seeds
(Spearman $-.561$ to $-.212$), and A4 exceeds A2 in median harm for every
seed.  The finite-attack optimization also agrees with the local prediction of
Theorem~\ref{thm:frontier}: at task effect $\Delta=.01$, optimized nonlinear
evidence per unit harm differs from $\Gamma_M$ by a median of 5.1\% across
40 fixed states (Spearman .994).

\section{Related Work}

\textbf{Stealthy false-data injection.}
Liu et al.~\cite{liu2009fdi} showed that an attacker who knows the
measurement model of DC state estimation can inject a change in the
column space of the measurement matrix that leaves the bad-data residual
unchanged. Later work quantified and defended against this attack.
Security indices give the fewest measurements an attacker must
corrupt~\cite{sandberg2010security, hendrickx2014security}. Kosut et
al.~\cite{kosut2011malicious} characterize unobservable attacks and design
detectors for them. Bobba et al.~\cite{bobba2010detecting} and Kim and
Poor~\cite{kim2011strategic} choose which measurements to protect so that
no unobservable attack remains. In dynamic systems, Pasqualetti et
al.~\cite{pasqualetti2013attack} characterize undetectable attacks through
system zeros. Fawzi et al.~\cite{fawzi2014secure} bound how many sensors
can be corrupted while the state remains recoverable. Mo and
Sinopoli~\cite{mo2016performance} bound the degradation a stealthy
attacker can cause. The output-to-output gain of Teixeira et
al.~\cite{teixeira2015gain} is closest to our score: for a static scalar
task, $1/\Gamma_M$ plays the same role. Theorem~\ref{thm:manifold} is a nonlinear,
task-restricted analogue of the column-space condition. Our monotonicity
property parallels the measurement-protection results: in our crop panel,
a direct check on acreage removes every harmful blind subset
(Section~\ref{sec:crop}). These works assume a known linear or linearized physical
model and measure harm on a state estimate. Our detectors and task models
are learned from data, harm is measured on a downstream prediction, and
the score is used to rank source subsets for audit.

\textbf{Evading anomaly detectors in industrial systems.}
Urbina et al.~\cite{urbina2016limiting} showed that stealthy attackers can
cause substantial damage against widely deployed industrial control system
(ICS) detectors, and proposed an impact metric under a false-alarm budget.
Erba et al.~\cite{erba2020concealment,erba2023whitebox} optimize
manipulations that conceal real process anomalies from
reconstruction-based detectors, including in white-box settings. Giraldo
et al.~\cite{giraldo2018survey} survey physics-based detection and how it
is evaluated. These works ask whether an existing anomaly can be hidden,
and they evaluate one attack at a time. We ask which source subsets let an
attacker \emph{create} a harmful downstream change, and we give a
closed-form screen per subset before any attack is run.

\textbf{Sensor fusion.}
Attacks on autonomous-driving perception show that fused systems fail when
an adversary controls a subset of sources. Shen et
al.~\cite{shen2020drift} exploit GPS spoofing against multi-sensor
localization, and Cao et al.~\cite{cao2021invisible} construct objects that
are invisible to both camera and LiDAR. Each of these attacks is built by
hand for one system. $\Gamma_M$ is a generic screen for which subsets of a
fused pipeline are weakly cross-checked.

\textbf{Adaptive evaluation and robustness scores.}
Test-time evasion~\cite{biggio2013evasion} and adaptive attacks on
defenses~\cite{carlini2017detectors,athalye2018obfuscated,tramer2020adaptive}
show that detectors must be attacked as complete systems. For this reason,
every failure we report is verified by a finite attack against the full
pipeline. Certified defenses~\cite{wong2018provable,cohen2019certified}
bound how much a single classifier's output can change under
norm-bounded perturbations. CLEVER~\cite{weng2018clever} estimates a local
robustness score from gradient norms. $\Gamma_M$ is likewise a cheap
first-order score, but it measures a trade-off between two models (a
downstream task and a separate detector), and it is restricted to
perturbations that a compromised source can actually realize.

\textbf{Benchmarking time-series detectors.}
Wu and Keogh~\cite{wu2023flawed} show that common anomaly benchmarks
contain trivial or mislabeled anomalies. Kim et al.~\cite{kim2022rigorous}
show that point-adjusted F1 can overstate detector quality. GDN~\cite{deng2021gdn}
and TranAD~\cite{tuli2022tranad}, which we audit, are evaluated by clean
detection metrics. Our results add a separate axis: clean false-positive
rate and recall on random corruption reveal little about task-targeted
manipulation (Section~\ref{sec:blind}).

\textbf{Threat prioritization.}
Defenders rank vulnerabilities in several ways:
\begin{itemize}
  \item by severity (CVSS~\cite{first2023cvss});
  \item by likelihood of exploitation (EPSS~\cite{jacobs2021epss});
  \item by stakeholder-specific decision trees (SSVC~\cite{spring2021ssvc}),
  which recent work automates with LLMs~\cite{alhaddad2025prompting}.
\end{itemize}
Attack graphs~\cite{sheyner2002attackgraphs} rank paths through a network.
These methods prioritize \emph{known} vulnerabilities from their
descriptions or from network reachability. $\Gamma_M$ instead prioritizes
\emph{data sources} by how weakly a detector constrains task-relevant
manipulation of them, and it can be computed before any vulnerability is
reported.

\textbf{Estimation and optimal design.}
Under Gaussian residual noise with covariance $\Omega$, $A_M$ is the Fisher
information of the attack controls. $\Gamma_M^2$ in Eq.~\eqref{eq:closed} is therefore
the Fisher information on the scalar task shift
$u_M^\top b$~\cite{kay1993estimation}. Maximizing $\Gamma_M$ over added
checks is the local analogue of $c$-optimal experimental
design~\cite{pukelsheim2006optimal}. We use this classical structure to
screen and explain. The monotonicity property suggests the margin could
also guide where to add checks.

\section{Limitations and Conclusion}
Because $\Gamma_M$ is a local first-order screen rather than a global predictor of finite-budget damage across heterogeneous operating states, the pooled state-and-attack margin--harm rank correlation on Hydraulic is weak ($-.075$). Accordingly, $\Gamma_M$ is used to prioritize verification within an audit, not to estimate calibrated failure probabilities. Our evaluation covers a laboratory hydraulic rig and a five-source crop-yield panel across PCA, reconstruction, GDN, and TranAD detectors; broader architectures, physical process constraints, and time-varying source compromise remain to be evaluated.

Clean anomaly detection and resistance to targeted source manipulation are distinct properties. The margin makes the local task--detector trade-off explicit and directs finite verification toward the most weakly monitored cases.

\section*{Ethics}
All experiments use public datasets and controlled offline manipulations, and the artifact supports defensive auditing.

\section*{LLM Usage Statement}
LLMs assisted with artifact preparation and manuscript language editing.

\section*{Ethics}
All experiments use public datasets and controlled offline manipulations, and the
artifact supports defensive auditing.

\section*{LLM Usage Statement}
LLMs assisted with artifact preparation, manuscript language editing.

\bibliographystyle{IEEEtran}
\bibliography{references}

\appendices
\section{Proofs}\label{app:proofs}

\subsection{Theorem~\ref{thm:manifold}: Harmful Blind Manifold}
\begin{proof}
Let
\[
\mathcal{L}_M
  = \{b\in\R^m : G_M(b)=G_M(0)\}.
\]
Since $G_M$ has constant rank $r_M$ near zero, the constant-rank
level-set theorem makes $\mathcal{L}_M$ a local manifold of dimension
$m-r_M$, with tangent space
\[
T_0\mathcal{L}_M=\kernel J_M.
\]

Every tangent vector is the initial velocity of a local curve in
$\mathcal{L}_M$.  When $h_M>0$, choose
\[
v=\frac{\proj_{\kernel J_M}u_M}{h_M},
\qquad \|v\|_2=1.
\]
Along a curve with initial velocity $v$, the detector output remains
constant, while the chain rule gives the task's first-order change:
\[
\left.\frac{d}{dt}F_M(b(t))\right|_{t=0}
  =u_M^\top v=h_M.
\]
Reparameterizing by arc length gives the claimed unit-speed curve.
\end{proof}

\subsection{Theorem~\ref{thm:gamma}: Closed Form and Minimum-Evidence Attack}
\begin{proof}
Because $\Omega\succ0$,
\[
b^\top A_Mb=\|J_Mb\|_{\Omega^{-1}}^2,
\qquad \kernel A_M=\kernel J_M.
\]

\emph{Blind case.}
Suppose $z=\proj_{\kernel A_M}u_M\ne0$.
Then $u_M^\top z=\|z\|_2^2$, so
\[
b_0=\frac{z}{\|z\|_2^2}
\]
has $u_M^\top b_0=1$ and $J_Mb_0=0$.
Hence $\Gamma_M=0$ when $h_M>0$.

\emph{Positive-margin case.}
If $h_M=0$, then $u_M\in\range(A_M)$.  For every feasible $b$,
\begin{align*}
1=u_M^\top b
 &= (A_M^{\dagger/2}u_M)^\top(A_M^{1/2}b)\\
 &\le
 \bigl(u_M^\top A_M^\dagger u_M\bigr)^{1/2}
 \bigl(b^\top A_Mb\bigr)^{1/2},
\end{align*}
where the inequality is Cauchy--Schwarz.  Consequently,
\[
b^\top A_Mb\ge
\bigl(u_M^\top A_M^\dagger u_M\bigr)^{-1}.
\]
The vector $b^*(1)$ in Eq.~\eqref{eq:closed} is feasible and attains
this bound.  Scaling it by $\Delta$ gives the stated minimum-evidence
attack for first-order task shift $\Delta$.
\end{proof}

\subsection{Theorem~\ref{thm:frontier}: Local Frontier}
\begin{proof}
Write $w^*=b^*(1)$, so $u_M^\top w^*=1$.

\emph{Upper bound.}
The inverse-function theorem applied to
$F_M(tw^*)-F_M(0)$ gives, for small $\Delta>0$, a scalar
\[
t(\Delta)=\Delta+O(\Delta^2)
\]
such that
\[
F_M(t(\Delta)w^*)-F_M(0)=\Delta.
\]
Since $C>\|w^*\|_2$, this attack also satisfies
$\|t(\Delta)w^*\|_2\le C\Delta$ for sufficiently small $\Delta$.
Taylor expansion then yields
\begin{align*}
V_M(\Delta)
 &\le \|G_M(t(\Delta)w^*)-G_M(0)\|_{\Omega^{-1}}\\
 &=\Delta\Gamma_M+O(\Delta^2).
\end{align*}

\emph{Lower bound.}
Every feasible attack obeys $\|b\|_2\le C\Delta$.
Uniform $C^2$ Taylor bounds therefore give constants $K_F,K_G$
for which
\begin{align*}
|u_M^\top b-\Delta|&\le K_F\Delta^2,\\
\|G_M(b)-G_M(0)-J_Mb\|_{\Omega^{-1}}
  &\le K_G\Delta^2.
\end{align*}
By the definition of $\Gamma_M$ and the reverse triangle
inequality,
\begin{align*}
\|G_M(b)-G_M(0)\|_{\Omega^{-1}}
 &\ge \|J_Mb\|_{\Omega^{-1}}-K_G\Delta^2\\
 &\ge \Gamma_M|u_M^\top b|-K_G\Delta^2\\
 &\ge \Delta\Gamma_M
       -(K_G+\Gamma_MK_F)\Delta^2.
\end{align*}
Taking the infimum over feasible $b$ proves
$V_M(\Delta)=\Delta\Gamma_M+O(\Delta^2)$.
\end{proof}

\subsection{Invariance and Monotonicity}

\emph{Invariance under a change of controls.}
Let $b=\psi(c)$, where $R=D\psi(0)$ is invertible.
The Jacobian and task gradient in the new coordinates are
\[
\widetilde{J}_M=J_MR,
\qquad \widetilde{u}_M=R^\top u_M.
\]
The substitution $b=Rc$ is a bijection between the linearized
unit-effect feasible sets, and it preserves the detector evidence.
Therefore the infimum in Eq.~\eqref{eq:gamma} is unchanged.

\emph{Monotonicity under an added check.}
Suppose an independent check contributes $B_S\succeq0$ to the
information matrix.  For every $b$ satisfying $u_M^\top b=1$,
\[
b^\top(A_M+B_S)b\ge b^\top A_Mb.
\]
Taking infima over the same feasible set shows that adding the check
cannot decrease $\Gamma_M$.

\section{Realizable Telemetry Controls}\label{app:controls}

For channel $j$, let $a_j$ be the fractional gain, $\beta_j$ the
normalized bias, and $s_j$ the training standard deviation of the
channel's cycle mean.  The manipulated trace is
\[
y'_{jt}=(1+a_j)y_{jt}+s_j\beta_j.
\]
When $1+a_j\ge0$, the corresponding summaries transform as
\begin{align*}
\mu'_j
  &= (1+a_j)\mu_j+s_j\beta_j,\\
\sigma'_j
  &= (1+a_j)\sigma_j,\\
(r'_j)^2
  &= (1+a_j)^2r_j^2+s_j^2\beta_j^2\\
  &\quad +2(1+a_j)s_j\beta_j\mu_j.
\end{align*}
Here $r_j$ denotes the root-mean-square (RMS) summary.
The minimum and maximum obey the same affine transformation as the
mean.

The certificates use the derivative of this feature map at zero.
Finite attacks and alarm evaluations instead use the exact map.

\section{Numerical Robustness}

\paragraph{Certificate and derivative checks}
Certificates are cross-checked against a constrained optimizer.
MLP Jacobians are independently checked using central differences.

\paragraph{Numerical sensitivity}
On the linear crop audit, rank tolerances from $10^{-12}$ to
$10^{-6}$ and diagonal shrinkage up to $10^{-2}$ produce the same
six blind subsets.  In the audit-allocation study, identity,
diagonal, and Ledoit--Wolf full-covariance whitening give the same
median numbers of verified failures:
\[
\begin{array}{c|rrrr}
\text{Audit budget} &10&25&50&100\\ \hline
\text{Median failures} &4&9&13&15
\end{array}
\]

\paragraph{Attack feasibility}
Every returned matched attack satisfies its support and norm
constraints.  The maximum constraint residual is
$2.22\times10^{-16}$.

\end{document}